\newtheorem{lemma}{Lemma}
\newtheorem{thm}{Theorem}
\newtheorem{cor}{Corollary}
\newcommand{\hide}[1]{}
\def\ie{{\it i.e.}}
\def\eg{{\it e.g.}}
\def\etal{{\it et al.}}
\def\eps{\varepsilon}
\def\opt{\mathrm{OPT}}
\newcommand{\ceil}[1]{\left\lceil #1 \right\rceil}
\newcommand{\RR}{\mathbb{R}}
\newcommand{\UUU}{\mathcal{U}}
\newcommand{\SSS}{\mathcal{S}}
\newcommand{\HHH}{\mathcal{H}}
\newcommand{\setcover}{{\sc Set Cover}}
\newcommand{\hittingset}{{\sc Hitting Set}}
\begin{document}

\title{Improved Approximation for Guarding Simple Galleries from the Perimeter\thanks{Some of these results appeared in preliminary form as \emph{D. Kirkpatrick. Guarding galleries with no nooks. In Proceedings of the 12th Canadian Conference on Computational Geometry (CCCG'00), pages 43--46, 2000.}}}
\author{
James King\\
\small{School of Computer Science}\\
\small{McGill University}\\
\small{\texttt{jking@cs.mcgill.ca}}
\and
David Kirkpatrick\\
\small{Department of Computer Science}\\
\small{University of British Columbia}\\
\small{\texttt{kirk@cs.ubc.ca}}
}

\maketitle


\begin{abstract}
We provide an $O(\log \log \opt)$-approximation algorithm for the problem of guarding a simple polygon with guards on the perimeter.  We first design a polynomial-time algorithm for building $\eps$-nets of size $O\kern-2pt\left(\frac{1}{\eps}\log \log \frac{1}{\eps}\right)$ for the instances of \hittingset{} associated with our guarding problem.  We then apply the technique of Br\"{o}nnimann and Goodrich to build an approximation algorithm from this $\eps$-net finder.  Along with a simple polygon $P$, our algorithm takes as input a finite set of potential guard locations that must include the polygon's vertices.  If a finite set of potential guard locations is not specified, \eg{}\null{} when guards may be placed anywhere on the perimeter, we use a known discretization technique at the cost of making the algorithm's running time potentially linear in the ratio between the longest and shortest distances between vertices.  Our algorithm is the first to improve upon $O(\log\opt)$-approximation algorithms that use generic net finders for set systems of finite VC-dimension.
\end{abstract}

\section{Introduction}

\subsection{The art gallery problem}

In computational geometry, art gallery problems are motivated by the question, ``How many security cameras are required to guard an art gallery?''  The art gallery is modeled as a connected polygon $P$.  A camera, which we will henceforth call a \emph{guard}, is modeled as a point in the polygon, and we say that a guard $g$ \emph{sees} a point $q$ in the polygon if the line segment $\overline{gq}$ is contained in $P$.  We call a set $G$ of points a \emph{guarding set} if every point in $P$ is seen by some $g\in G$.  Let $V(P)$ denote the vertex set of $P$ and let $\partial P$ denote the boundary of $P$.  We assume that $P$ is closed and non-degenerate so that $V(P) \subset \partial P \subset P$.

We consider the minimization problem that asks, given an input polygon $P$ with $n$ vertices, for a minimum guarding set for $P$.  Variants of this problem typically differ based on what points in $P$ must be guarded and where guards can be placed, as well as whether $P$ is simple or contains holes.  Typically we want to guard either $P$ or $\partial P$, and our set of potential guards is typically $V(P)$ (vertex guards), $\partial P$ (perimeter guards), or $P$ (point guards).  For results on art gallery problems not related to minimization problems we direct the reader to O'Rourke's book \cite{orourke87}, which is available for free online.

The problem was proved to be NP-complete first for polygons with holes by O'Rourke and Supowit \cite{orourke83}.  For guarding simple polygons it was proved to be NP-complete for vertex guards by Lee and Lin \cite{lee86}; their proof was generalized to work for point guards by Aggarwal \cite{aggarwal84}.  This raises the question of approximability.  There are two major hardness results.  First, for guarding simple polygons, Eidenbenz \cite{eidenbenz98} proved that the problem is APX-complete, meaning that we cannot do better than a constant-factor approximation algorithm unless ${\rm P}={\rm NP}$.  Subsequently, for guarding polygons with holes, Eidenbenz \etal{}\null{} \cite{Eidenbenz01} proved that the minimization problem is as hard to approximate as \setcover{} in general if there is no restriction on the number of holes.  It therefore follows from results about the inapproximability of \setcover{} by Feige \cite{Feige98} and Raz and Safra \cite{Raz97} that, for polygons with holes, it is NP-hard to find a guarding set of size $o(\log n)$.  These hardness results hold whether we are dealing with vertex guards, perimeter guards, or point guards.

Ghosh \cite{ghosh87} provided an $O(\log n)$-approximation algorithm for guarding polygons with or without holes with vertex guards.  His algorithm decomposes the input polygon into a polynomial number of cells such that each point in a given cell is seen by the same set of vertices.  This discretization allows the guarding problem to be treated as an instance of \setcover{} and solved using general techniques.  This will be discussed further in Section \ref{sec:setcover}.  In fact, applying methods for \setcover{} developed after Ghosh's algorithm, it is easy to obtain an approximation factor of $O(\log\opt)$ for vertex guarding simple polygons or $O(\log h \log OPT)$ for vertex guarding a polygon with $h$ holes.  

When considering point guards or perimeter guards, discretization is far more complicated since two distinct points will not typically be seen by the same set of potential guards even if they are very close to each other.  Deshpande \etal{}\null{} \cite{deshpande07} obtain an approximation factor of $O(\log\opt)$ for point guards or perimeter guards by developing a sophisticated discretization method that runs in pseudopolynomial time\footnote{It is a pseudopolynomial-time algorithm in that its running time may be linear in the ratio between the longest and shortest distances between two vertices.}.  Efrat and Har-Peled \cite{efrat06} provided a randomized algorithm with the same approximation ratio that runs in fully polynomial expected time; their discretization technique involves only considering guards that lie on the points of a very fine grid.

Our contribution is an algorithm for guarding simple polygons, using either vertex guards or perimeter guards.  Our algorithm has a guaranteed approximation factor of $O(\log\log\opt)$ and the running time is polynomial in $n$ and the number of potential guard locations.  This is the best approximation factor obtained for vertex guards and perimeter guards.  If no finite set of guard locations is given, we use the discretization technique of Deshpande \etal{}\null{} and our algorithm is polynomial in $n$ and $\Delta$, where $\Delta$ is the ratio between the longest and shortest distances between vertices.  

\subsection{Guarding problems as instances of \hittingset{}}\label{sec:setcover}

\subsubsection{Set Cover and Hitting Set}
\setcover~is a well-studied NP-complete optimization problem.  Given a universe $\mathcal{U}$ of elements and a collection $\mathcal{S}$ of subsets of $\mathcal{U}$, \setcover~asks for a minimum subset $\mathcal{C}$ of $\mathcal{S}$ such that $\bigcup_{S\in\mathcal{C}}S = \mathcal{U}$. In other words, we want to cover all of the elements in $\UUU$ with the minimum number of sets from $\mathcal{S}$.  In general, \setcover~is not only difficult to solve exactly (see, \eg, \cite{Garey79}) but is also difficult to approximate---no polynomial time approximation algorithm
can have a $o(\log n)$ approximation factor unless $\mathrm{P=NP}$ \cite{Raz97}.  Conversely, a simple greedy heuristic\footnote{The heuristic repeatedly picks the set that covers the most uncovered elements.} \cite{chvatal79} for \setcover~attains an $O(\log n)$ approximation factor.  Another problem, \hittingset, asks for a minimum subset $\mathcal{H}$ of $\mathcal{U}$ such that $S \bigcap \mathcal{H} \neq \emptyset$ for any $S\in\SSS$.  Any instance of \hittingset{} can easily be formulated as an instance of \setcover{} and vice versa.

\subsubsection{Set Systems of Guarding Problems}
Guarding problems can naturally be expressed as instances of \setcover{} or \hittingset.  We wish to model an instance of a guarding problem as an instance of \hittingset{}.  The desired set system $(\mathcal{U},\mathcal{S})$ is constructed as follows.  $\mathcal{U}$ contains the potential guard locations.  For each point $p$ that needs to be guarded, $S_p$ is the set of potential guards that see $p$, and $\SSS = \{S_p \;|\; p \in P \}$.

\subsubsection{$\eps$-Nets}
Informally, if we wish to relax the \hittingset{} problem, we can ask for a subset of $\mathcal{U}$ that hits all \emph{heavy} sets in $\mathcal{S}$.  This is the idea behind $\eps$-nets.  For a set system $(\mathcal{U},\mathcal{S})$ and an additive weight function $w$, an $\eps$-net is a subset of $\mathcal{U}$ that hits every set in $\mathcal{S}$ having weight at least $\eps\cdot w(\mathcal{U})$.

It is known that set systems of VC-dimension $d$ admit $\eps$-nets of size $O\kern-2pt\left(\frac{d}{\eps}\log \frac{1}{\eps}\right)$ \cite{blumer89} and that this is asymptotically optimal without further restrictions \cite{komlos92}.  It is also known that set systems associated with the guarding of simple polygons with point guards\footnote{This bound also applies \textit{a fortiori} to perimeter guards and vertex guards.} have constant VC-dimension \cite{kalai97,valtr98}.  Thus when guarding simple polygons we can construct $\eps$-nets of size $O\kern-2pt\left(\frac{1}{\eps}\log\frac{1}{\eps}\right)$ using general techniques.  In a polygon with $h$ holes the VC-dimension is $O(\log h)$ \cite{valtr98} and therefore $\eps$-nets of size $O\kern-2pt\left(\frac{1}{\eps}\log\frac{1}{\eps}\log h\right)$ can be constructed.

Using techniques specific to vertex guarding or perimeter guarding a simple polygon, we are able to break through the general $\Theta\kern-2pt\left(\frac{d}{\eps}\log \frac{1}{\eps}\right)$ lower bound to build smaller $\eps$-nets.  This result is stated in the following theorem.
\begin{thm}\label{thm:main}
For the problem of guarding a simple polygon with vertex guards or perimeter guards, we can build $\eps$-nets of size $O\kern-2pt\left(\frac{1}{\eps}\log \log \frac{1}{\eps}\right)$ in polynomial time.
\end{thm}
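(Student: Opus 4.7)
My plan is to exploit the essentially one-dimensional character of $\partial P$ in order to sharpen the generic $O(\frac{1}{\eps}\log\frac{1}{\eps})$ bound that follows from constant VC-dimension.

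First, I would parameterize $\partial P$ as a cyclic one-dimensional domain and describe, for each $p \in P$ that must be guarded, the set $S_p \subseteq \UUU$ of potential guards seeing $p$ as a union of arcs of $\partial P$, using standard visibility-polygon structure. Vertex guards fit into the same framework as the restriction of $\UUU$ to the finite set $V(P) \subset \partial P$, so the two variants of the theorem can be handled uniformly.

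Second, I would combine Haussler--Welzl weighted random sampling with a refined second round, inspired by the Aronov--Ezra--Sharir improvement for certain geometric set systems. Take a weighted random sample $R_0 \subseteq \UUU$ of size $\Theta(1/\eps)$; since the set system has constant VC-dimension, every heavy $S_p$ whose weight is spread over several well-separated arcs of $\partial P$ is hit by $R_0$ with overwhelming probability. Any surviving heavy $S_p$ must therefore be ``nearly unimodal'', in the sense that almost all of its weight is concentrated in a single uncovered arc lying in one of the $O(1/\eps)$ gaps between consecutive points of $R_0$ along $\partial P$. Recurse independently inside each gap with a geometrically decreasing weight threshold. Because the maximum possible weight of an uncovered heavy set drops doubly exponentially with the depth of recursion, only $O(\log\log\frac{1}{\eps})$ levels are required; each level contributes a total of $O(1/\eps)$ new points across all gaps, yielding a final $\eps$-net of size $O(\frac{1}{\eps}\log\log\frac{1}{\eps})$. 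The sampling steps can be derandomized in polynomial time using standard conditional-expectation or Br\"onnimann--Goodrich-style arguments, preserving the algorithmic claim of the theorem.

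The main obstacle is the structural lemma underpinning the second round: one must prove that after the initial random sample, every surviving heavy set really is dominated by a single uncovered arc inside one gap. This requires a visibility-specific argument---plausibly tracing the windows through reflex vertices and bounding the union complexity of the arcs that make up $S_p$---to rule out heavy but highly fragmented uncovered sets, which the generic VC-dimension bound cannot exclude. Once that lemma is in place, the recursion analysis and the amortization of the $O(1/\eps)$ additional points per level are routine, and the $O(\log\log\frac{1}{\eps})$ factor falls out of the double-exponential shrinkage of residual weight.
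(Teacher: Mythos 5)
Your proposal takes a genuinely different route from the paper, and it contains a gap you partly acknowledge but understate. The paper's construction is entirely deterministic: it subdivides $\partial P$ into a hierarchy of \emph{fragments} of equal weight, with doubly-exponentially decreasing branching factors $b_i \approx 2^{2^{t-i}}$ over $t = \lceil\log\log\frac{1}{\eps}\rceil$ levels, and places \emph{extremal visibility guards} (first and last points visible between each pair of sibling fragments). The key combinatorial fact (Lemmas~\ref{lem:AP_fragment_bound} and~\ref{lem:HF_fragment_bound}) is a tangent-counting argument showing that any unguarded point sees at most $4i$ fragments at level $i$, so every heavy fragment at the bottom level is hit. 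No random sampling, and no appeal to VC-dimension beyond motivation.

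Your sketch, by contrast, is a probabilistic two-round/recursive scheme in the spirit of Aronov--Ezra--Sharir. The gap you flag (``every surviving heavy set is dominated by a single uncovered arc inside one gap'') is real, and it is not a routine consequence of constant VC-dimension: visibility from a point on $\partial P$ can fragment into $\Theta(n)$ disjoint arcs, and a heavy set can have its weight split into a handful of gaps of $R_0$ without being ``well-separated'' enough to be hit with overwhelming probability. The generic sampling bound only gives you $O(\frac{1}{\eps}\log\frac{1}{\eps})$; to beat it you need a visibility-specific argument, and that is exactly the extremal-guard / tangent machinery the paper builds. Beyond what you flag, there is a second unsupported step: the assertion that residual weight drops \emph{doubly} exponentially with depth, giving only $O(\log\log\frac{1}{\eps})$ levels. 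With a sample of size $\Theta(1/\eps)$ per level and constant VC-dimension, gap weights fall by roughly a constant factor per level, which is $\Theta(\log\frac{1}{\eps})$ levels, not $\Theta(\log\log\frac{1}{\eps})$. The paper achieves the $\log\log$ depth not by weight shrinkage but by \emph{engineering} the branching factors to be doubly exponential, so that each of the $t$ levels contributes about $O(1/\eps)$ extremal guards in total; this is a deliberate choice of the hierarchy, not an emergent consequence of a sampling recursion. As written, your proposal cannot reach $O(\frac{1}{\eps}\log\log\frac{1}{\eps})$ without both the concentration lemma and a separate mechanism forcing double-exponential progress.
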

\begin{proof}
In Section \ref{sec:quadratic} we introduce the basic ideas that allow the construction of $\eps$-nets of size $O(1/\eps^2)$.  In Section \ref{sec:hierarchical} we give a more complicated, hierarchical technique that lets us construct $\eps$-nets of size 
$O\kern-2pt\left(\frac{1}{\eps}\log \log \frac{1}{\eps}\right)$.
\end{proof}

A similar result for a different problem was recently obtained by Aronov \etal{} \cite{aronov09}, who proved the existence of $\eps$-nets of size $O\kern-2pt\left(\frac{1}{\eps}\log \log \frac{1}{\eps}\right)$ when $\SSS$ is either a set of axis-parallel rectangles in $\RR^2$ or axis-parallel boxes in $\RR^3$.

\subsubsection{Approximating \hittingset{} with $\eps$-Nets}
Br\"{o}nnimann and Goodrich \cite{Bronnimann95} introduced an algorithm for using a \emph{net finder} (an algorithm for finding $\eps$-nets) to find approximately optimal solutions for the \hittingset{} problem.  Their algorithm gives weights (initially uniform\footnote{Initial weights can be non-uniform but this is not necessary for our purposes.}) to the elements in $\UUU$.  The net finder is then used to find an $\eps$-net for $\eps = 1/2c'$, with $c'$ fixed at a constant between 1 and $2\cdot \opt$.  If there is a set in $\SSS$ not hit by the $\eps$-net, the algorithm picks such a set and doubles the weight of every element in it.  It then repeats, finding a new $\eps$-net given the new weighting.  This continues until the algorithm finds an $\eps$-net that hits every set in $\SSS$.  If the net finder constructs $\eps$-nets of size $f(1/\eps)$, their main algorithm finds a hitting set of size $f(4\cdot\opt)$.

Previous approximation algorithms achieving guaranteed approximation factors of $\Theta(\log\opt)$ \cite{deshpande07,efrat06} have used this technique, along with generic $\eps$-net finders of size $O\kern-2pt\left(\frac{1}{\eps}\log\frac{1}{\eps}\right)$ for set systems of constant VC-dimension.  Instead, we use our net finder from Theorem \ref{thm:main} to obtain the following corollary, whose proof is given in Section \ref{sec:main}. 

\begin{cor}\label{cor:main}
Let $P$ be a simple polygon with $n$ vertices and let $G$ be a set of potential guard locations such that $V(P)\subseteq G \subset \partial P$.  Let $T\subseteq P$ be the set of points we want to guard.  There is a polynomial-time algorithm that outputs a guarding set for $T$ of size $O(\opt \cdot \log\log \opt)$, where $\opt$ is the size of the minimum subset of $G$ that guards $T$.
\end{cor}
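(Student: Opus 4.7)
The plan is to apply the Br\"onnimann--Goodrich framework directly with the $\eps$-net finder from Theorem \ref{thm:main}, after first reducing the guarding instance to a finite \hittingset{} instance of polynomial size. Since $G$ is given as a finite set of points on $\partial P$, the visibility relation defines for each point $q \in P$ a subset $S_q \subseteq G$ of guards that see $q$. By intersecting $T$ with the arrangement of visibility regions of the guards in $G$ (the lines through pairs of reflex vertices and points of $G$), we obtain a partition of $T$ into polynomially many cells (in $n$ and $|G|$) on each of which the function $q \mapsto S_q$ is constant. Picking one representative point $p$ per non-empty cell yields a finite family $\SSS = \{S_p\}$ whose optimal hitting set in $G$ has size exactly $\opt$, and the total set system size is polynomial.

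Next I would apply the Br\"onnimann--Goodrich scheme to $(\UUU,\SSS) = (G,\SSS)$, using the $\eps$-net finder of Theorem \ref{thm:main} as a black box. Following the standard implementation, I would guess the value of $\opt$ by doubling search over a parameter $c' \in \{1,2,4,\ldots\}$ between $1$ and $2\opt$. For each guess I would initialize uniform weights on $G$, set $\eps = 1/(2c')$, and repeatedly compute an $\eps$-net $N$ with respect to the current weights. To check whether $N$ is a hitting set, I use the finite discretization: verify for each representative point $p$ that $N \cap S_p \neq \emptyset$; this can be done in polynomial time. If some $S_p$ is unhit, double the weights of its elements in $G$ and recompute a net. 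If the total weight exceeds the Br\"onnimann--Goodrich threshold without termination, the guess $c'$ is too small and I double it and restart.

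The convergence argument is identical to \cite{Bronnimann95}: once $c' \geq \opt$, a standard weight-doubling potential argument shows that the algorithm halts after $O(\opt \log (|G|/\opt))$ iterations. When it halts, the output is an $\eps$-net for $\eps = 1/(2c') \leq 1/(2\opt)$, which by Theorem \ref{thm:main} has size $f(1/\eps) = O\kern-2pt\left(\frac{1}{\eps}\log\log\frac{1}{\eps}\right) = O(c' \log \log c') = O(\opt \log \log \opt)$. Since this $\eps$-net hits every $S_p$, the corresponding subset of $G$ guards every representative point, hence every point of $T$.

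The main potential obstacle is confirming that the discretization preserves $\opt$ and yields a polynomial-size verification oracle; this is where requiring $V(P) \subseteq G \subset \partial P$ matters, because the arrangement of visibility regions is then generated by segments through pairs of vertices and guards, which has polynomial complexity in a simple polygon. The rest is a direct instantiation of the Br\"onnimann--Goodrich reduction with $f(r) = O(r \log \log r)$, yielding the claimed $O(\opt \log \log \opt)$ approximation in polynomial time.
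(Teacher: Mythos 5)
Your proposal is correct and follows essentially the same route as the paper: the Br\"onnimann--Goodrich weight-doubling scheme with a doubling search over the guess $c'$, instantiated with the $\eps$-net finder of Theorem \ref{thm:main}, giving a hitting set of size $O(\opt\log\log\opt)$ after $O(\opt\log(|G|/\opt))$ iterations. The only cosmetic difference is the verifier: you discretize $T$ via the arrangement of the guards' visibility regions and check representative points, whereas the paper simply computes the union of the visibility polygons of the current net (via Bose et al.) and compares it with $T$; both are polynomial-time and interchangeable here.
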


\section{The Main Algorithm}\label{sec:main}

\subsection{Main algorithm.}
Our main algorithm is an application of that presented by Br\"{o}nnimann and Goodrich \cite{Bronnimann95}.  Their algorithm provides a generic way to turn a \emph{net finder}, \ie{} an algorithm for finding $\eps$-nets for an instance of \hittingset{}, into an approximation algorithm.  Along with a net finder we also need a \emph{verifier}, which either states correctly that a set $H$ is a hitting set, or returns a set from $\SSS$ that is not hit by $H$.

For the sake of completeness we present the entire algorithm here.  $G$ is the set of potential guard locations and $T$ is the set of points that must be guarded.  We first assign a weight function $w$ to the set $G$.  When the algorithm starts each element of $G$ has weight 1.  The main idea of the algorithm is to repeatedly find an $\eps$-net $H$ and, if $H$ is not a hitting set (\ie{}\null{} if it does not see everything in $T$), to choose a point $p \in T$ that is not seen by $H$ and double the weight of any guard that sees $p$.

\subsubsection{Bounding the number of iterations.}
For now assume we know the value of $\opt$ and we set $\eps = \frac{1}{2\cdot\opt}$.  We give an upper bound for the number of doubling iterations the algorithm can perform.  Each iteration increases the total weight of $G$ by no more than a multiplicative factor of $\left(1+\eps\right)$ (since the guards whose weight we double have at most an $\eps$ proportion of the total weight).  Therefore after $k$ iterations the weight has increased to at most 
$$
|G|\cdot\left(1+\eps\right)^k ~\leq~
|G|\cdot\exp\left({\frac{k}{2\cdot\opt}}\right) ~\leq~
|G|\cdot2^{\left({\frac{3k}{4\cdot\opt}}\right)}~.
$$
Let $\HHH\subseteq G$ be an optimal hitting set (\ie{} guarding set) of size $\opt$.  For an element $h\in \HHH$ define $z_h$ as the number of times the weight of $h$ has been doubled.  Since $\HHH$ is a hitting set, in each iteration some guard in $\HHH$ has its weight doubled, so we have 
$$
\sum_{h\in \HHH} z_h \geq k
$$
and
\begin{eqnarray*}
w(\HHH) &=& \sum_{h\in\HHH} 2^{z_h}\\
&\geq& \opt\cdot 2^{\left(\frac{k}{\opt}\right)}~~~~~\mbox{(since~} 2^x \mbox{~is a convex function).}
\end{eqnarray*}
We now have
$$
\opt\cdot 2^{\left(\frac{k}{\opt}\right)} ~\leq~
w(\HHH) ~\leq~
w(G) ~\leq~
|G|\cdot2^{\left({\frac{3k}{4\cdot\opt}}\right)}~,
$$
which gives us
$$
k \leq 4\cdot\opt\cdot\log\left(\frac{|G|}{\opt}\right)~.
$$
This bound also tells us that the total weight $w(G)$ never exceeds $\frac{|G|^4}{\opt^3}$.

We must now address the fact that the value of $\opt$ is unknown.  We maintain a variable $c'$ which is our guess at the value of $\opt$, starting with $c'=1$.  If the algorithm runs for more than $4\cdot c'\cdot\log\left(\frac{|G|}{c'}\right)$ iterations without obtaining a guarding set, this implies that there is no guarding set of size $c'$ so we double our guess.  When our algorithm eventually obtains a hitting set, we have $\opt\leq c'\leq 2\cdot \opt$.  The hitting set obtained is a $\left(\frac{1}{2c'}\right)$-net build by our net finder.  Therefore, using the method from Section \ref{sec:hierarchical} to build an $\eps$-net of size $O\kern-2pt\left(\frac{1}{\eps} \log\log \frac{1}{\eps}\right)$, we obtain a guarding set of size $O(\opt\cdot\log\log\opt)$.

\subsubsection{Verification}

The main algorithm requires a verification oracle that, given a set $H$ of guards, either states correctly that $H$ guards $T$ or returns a point $p\in T$ that is not seen by $H$.  We can use the techniques of Bose \etal{} \cite{bose02} to find the visibility polygon of any guard in $H$ efficiently.  It will always be the case that $|H|<n$.  Finding the union of visibility polygons of guards in $H$ can be done in polynomial time, as can comparing this union with $T$.

\hide{
\subsubsection{Fragmentation and the weight function}

Our net finder will subdivide $\partial P$ into a polynomial number of fragments each having equal weight, where a fragment is an interval of $\partial P$ that is the union of line segments (if $G=\partial P$) or vertices (if $G=V(P)$).  To find the endpoints of fragments in polynomial time we need to maintain a concise (\ie{} polynomial) representation of the weight function. 

When $G=V(P)$ we simply maintain the weight function for each vertex $v$.  When $G=\partial P$ we need to maintain the density (\ie{} weight per unit length) for a certain number of segments of the perimeter.  When the algorithm starts, this set of segments is simply the set of edges of the polygon, and each segment has density 1.  In each doubling step, we will double the density for whatever part of $\partial P$ does not see some unguarded point $p$.  For any edge $e$ of $\partial P$ and any point $p\in P$, the points in $e$ that see $p$ form a closed segment, \ie{} a contiguous interval of $e$, that can be computed in polynomial time.  Thus we will double the weight of at most 2 contiguous intervals of each edge.  Since the number of doubling steps is polynomial, we can maintain the weight function as a polynomially sized set of line segments whose union is $\partial P$; furthermore, each segment's density is equal to $2^k$ for some nonnegative integer $k$ that is polynomial in $n$.

}

\section{Building quadratic nets}\label{sec:quadratic}

In this section we show how to build an $\eps$-net using $O(1/\eps^2)$ guards.  This result is not directly useful to us but we use this section to perform the geometric leg work, and hopefully provide some intuition, without worrying about the hierarchical decomposition to be described in Section \ref{sec:hierarchical}.  It should be clear that these $\eps$-nets can be constructed in polynomial time.

\subsection{Subdividing the Perimeter.}

For the construction both of the $\eps$-nets in this section and those in the next section we will subdivide the perimeter into a number of \emph{fragments}.  Fragment endpoints will always lie on vertices, but the weight of a guard location may be split between multiple fragments and a fragment may consist of a single vertex.

The key difference between the construction of the $\eps$-nets in this section and those in the next section is the method of fragmentation.  In this section, the perimeter will simply be divided into $ m = 4/\eps$ fragments each having weight $\frac{\eps}{4}w(G)$.  For our purposes, $1/\eps$ will always be an integer so $m$ will always be an integer.

\subsection{Placing Extremal Guards.}

For two fragments $A_i$ and $A_j$ we will place guards at \emph{extreme points of visibility}.  Those are the first and last points on $A_i$ seen from $A_j$ and the first and last points on $A_j$ seen from $A_i$.  For a contiguous fragment we define the first (resp. last) point of the segment according to the natural clockwise ordering on the perimeter.  We use $G(A_i,A_j)$ to denote the set of up to 4 extremal guards placed between $A_i$ and $A_j$.

These extreme points of visibility might not lie on vertices.  In fact, it is entirely possible that two fragments $A_i$ and $A_j$ see each other even if no vertex of $A_i$ sees $A_j$ and vice versa.  If an extreme point of visibility is not a potential guard location, we will simply not place a guard there.  Our proofs, in particular the proof of Lemma \ref{lem:tangent_helper}, will only require guards on extreme points of visibility that either lie on vertices or on fragment endpoints.

\subsection{All Pairs Extremal Guarding.}

Our aim in this section is to build an $\eps$-net by placing extremal guards for every pair $(A_i,A_j)$ of fragments.  We denote this set of guards with $$S_{AP}=\bigcup_{i\neq j}G(A_i,A_j)~.$$  Note that $|S_{AP}| \leq 4{m\choose 2} = O(1/\eps^2)$.  Also note that every fragment endpoint is included in $S_{AP}$. 

\begin{lemma}\label{lem:AP_fragment_bound}
Any point not guarded by $S_{AP}$ sees at most 4 fragments.
\end{lemma}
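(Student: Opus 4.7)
I plan to prove the lemma by contradiction: suppose $p$ is not guarded by $S_{AP}$ yet sees at least five fragments, and derive a contradiction by exhibiting an extremal guard visible from $p$.

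The first observation is that every fragment endpoint lies in $S_{AP}$. For two adjacent fragments $A_i$ and $A_{i+1}$ sharing an endpoint $v$, the point $v$ is simultaneously the last point of $A_i$ and the first point of $A_{i+1}$ in cyclic order, and it trivially sees itself as a point of the neighbouring fragment, so $v\in G(A_i,A_{i+1})\subseteq S_{AP}$. Since $p$ is unguarded by $S_{AP}$, $p$ sees no fragment endpoint, and therefore each visible arc of $p$ lies strictly inside the interior of a single fragment. Seeing at least five fragments thus forces the visibility polygon of $p$ to contain at least five distinct visible arcs of $\partial P$, separated by at least five ``windows''.

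Next I would exploit the structure of the visibility-polygon boundary: each window is a chord of $P$ with one endpoint at a reflex vertex of $P$ visible from $p$ (the blocker of the line of sight) and the other at a ``jump destination'' on $\partial P$. So $p$ sees at least five reflex vertices of $P$, each lying on a visible fragment. Since every vertex of $P$ lies in the potential-guard set $G$, it suffices to exhibit one such reflex vertex $v\in A_i$ that is extremal in $G(A_i,A_j)$ for some second visible fragment $A_j$; such a $v$ would then lie in $S_{AP}$ and be visible from $p$, yielding the contradiction. The geometric mechanism I would invoke is that the edge of $v$ on the hidden-from-$p$ side occludes the ``pocket'' of $A_i$ lying just past $v$; if $A_j$ is chosen sufficiently far around $p$'s visibility cycle---which is possible precisely because there are at least five visible fragments---then that same edge occludes the pocket from all of $A_j$ as well, making $v$ the last point of $A_i$ visible from any point of $A_j$ in the appropriate cyclic direction.

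The main obstacle will be exactly this last geometric step. A reflex vertex $v$ capping an arc on $A_i$ is typically \emph{not} extremal between $A_i$ and the visibility-adjacent fragment across its window, since the pocket past $v$ on $A_i$ is often reachable from the near end of that neighbour. The threshold of five visible fragments (rather than, say, three) in the lemma is exactly what guarantees enough combinatorial and geometric room to pick a suitable non-adjacent $A_j$ for which $v$'s hidden-side edge is a genuine occluder of the pocket. I expect this angular case analysis, likely leveraging the tangent helper lemma foreshadowed by the paper, to be the bulk of the work.
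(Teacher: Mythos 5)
Your setup (no visible fragment endpoints, hence each window of $x$'s visibility region is capped by a visible reflex vertex, i.e.\ a tangent) matches the paper's framework, but the proof has a genuine gap at exactly the step you flag as ``the main obstacle'': you never establish that some visible vertex actually belongs to $S_{AP}$, and the mechanism you propose for doing so is not the one that works. You want the tangent vertex $v$ on $A_i$ itself to be an extreme point of visibility in $G(A_i,A_j)$ for a suitably ``far'' fragment $A_j$, but this is generally false no matter how $A_j$ is chosen: the pocket beyond $v$ can be visible from $A_j$ through an entirely different part of the polygon, so $v$ need not be first or last on $A_i$ as seen from any visible fragment. The paper's Lemma~\ref{lem:tangent_helper} avoids this entirely: it does not claim the tangent point is extremal, but rather that the extreme point $q$ (the first point of $A_j$ seen from the next fragment $A_k$), whatever it is, must be a vertex of the geodesic between the first and last points of $A_j$ seen by $x$ --- and every vertex of that geodesic is visible from $x$. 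That shift, from ``the tangent vertex is the guard'' to ``the guard lies on a geodesic all of whose vertices $x$ sees,'' is the missing idea, and it also needs the angular hypothesis (combined angle of the two fragments at $x$ at most $\pi$) which your sketch never introduces.

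Relatedly, your explanation of why the threshold is five fragments is only a gesture. In the paper the count of $5$ enters through a concrete parity argument: viewing the transitions around $x$'s visibility cycle as a directed structure, either some visible fragment has no tangent from $x$, or all visible fragments have left tangents (or all right). In the no-tangent case one takes $A_{-2},A_{-1},A_0,A_1,A_2$ (this is where five fragments are needed) and observes that one of the pairs $(A_{-2},A_{-1})$, $(A_1,A_2)$ subtends combined angle at most $\pi$ at $x$, so Lemma~\ref{lem:tangent_helper} applies; in the all-left-tangent case one picks any consecutive pair of combined angle less than $\pi$. Without this counting and the angle condition, ``enough combinatorial and geometric room'' does not pin down a pair of fragments to which the extremal-guard argument can be applied, so the contradiction is not actually derived.
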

\begin{cor}
$S_{AP}$ is an $\eps$-net of size $O(1/\eps^2)$.
\end{cor}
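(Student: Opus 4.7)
The plan is to combine the size bound that is already observed in the paragraph preceding the corollary with Lemma \ref{lem:AP_fragment_bound}, relating visibility of unhit points to the total weight of the fragments they can see.

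First I would note that the cardinality bound is immediate: by construction $|S_{AP}| \leq 4\binom{m}{2}$ with $m=4/\eps$, so $|S_{AP}|=O(1/\eps^2)$. The substantive content is that $S_{AP}$ actually hits every heavy set. So suppose $p\in P$ is not guarded by $S_{AP}$, and let $S_p\subseteq G$ denote the set of potential guard locations that see $p$. By Lemma \ref{lem:AP_fragment_bound}, the points of $\partial P$ visible from $p$ are contained in at most $4$ of the fragments $A_1,\dots,A_m$; in particular, $S_p$ is contained in the union of (at most) $4$ fragments.

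Next I would invoke the fragmentation invariant that each $A_i$ has weight exactly $\frac{\eps}{4}w(G)$, which gives
\[
w(S_p)\;\leq\;4\cdot\tfrac{\eps}{4}\,w(G)\;=\;\eps\cdot w(G).
\]
Hence every set in $\SSS$ of weight strictly larger than $\eps\cdot w(G)$ is hit by $S_{AP}$; to match the definition of $\eps$-net given in Section~\ref{sec:setcover} (where ``heavy'' means weight $\geq \eps w(G)$), one can either choose the fragment weight to be slightly less than $\frac{\eps}{4}w(G)$ (e.g.\ use $m=\lceil 4/\eps\rceil+1$) or, equivalently, run the construction with $\eps' = \eps/2$; either adjustment changes only the hidden constant in $O(1/\eps^2)$.

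The only place where anything nontrivial happens is Lemma \ref{lem:AP_fragment_bound} itself, which is stated above and not what this corollary is about; once we take that lemma for granted, the corollary is just bookkeeping on fragment weights. So I do not expect any real obstacle here, and the main thing to be careful about is the off-by-a-constant issue between $w(S_p)\leq \eps w(G)$ and the definition requiring strict domination, which is handled by a trivial rescaling absorbed into the $O(\cdot)$ notation.
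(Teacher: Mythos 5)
Your proof is correct and matches the paper's (implicit) reasoning exactly: the paper states this corollary without proof precisely because it follows from Lemma \ref{lem:AP_fragment_bound} plus the fact that each of the $m=4/\eps$ fragments carries weight $\frac{\eps}{4}w(G)$, which is the bookkeeping you carry out. Your extra care about the boundary case (weight exactly $\eps\,w(G)$ versus the ``at least $\eps\,w(G)$'' in the definition) is a reasonable refinement that the paper silently absorbs into constants, and it does not change the approach.
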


For the proof of Lemma \ref{lem:AP_fragment_bound} we need to present additional properties of the fragments that can be seen by a point.  For a point $x$, the fragments seen by $x$ are ordered clockwise in the order they appear on the boundary of $P$.  We need to consider lines of sight from $x$, and what happens when a transition is made from seeing one fragment $A_i$ to seeing the next fragment $A_j$.  There are three possibilities:
\begin{enumerate}
\vspace{-1mm}\item $j=i+1$ and $x$ sees the guard at the common endpoint of $A_i$ and $A_j$
\vspace{-2mm}\item $A_j$ occludes $A_i$, in which case we say that $x$ has a \emph{left tangent} to $A_j$ (see Figure \ref{fig:left_tangent})
\vspace{-2mm}\item $A_i$ was occluding $A_j$, in which case we say that $x$ has a \emph{right tangent} to $A_i$ (see Figure \ref{fig:right_tangent}).
\end{enumerate}

We say a fragment $A$ \emph{owns} a point $x$ if $x$ sees $A$ in a sector of size at least $\pi$.  We assume any point $x$ is owned by at most one fragment; if $x$ is a fragment endpoint it will itself be a guard, and otherwise if $x$ is owned by two fragments then only those two fragments can see it.

\begin{figure}

\begin{minipage}[b]{0.48\linewidth}
\centering
\includegraphics[scale=.7]{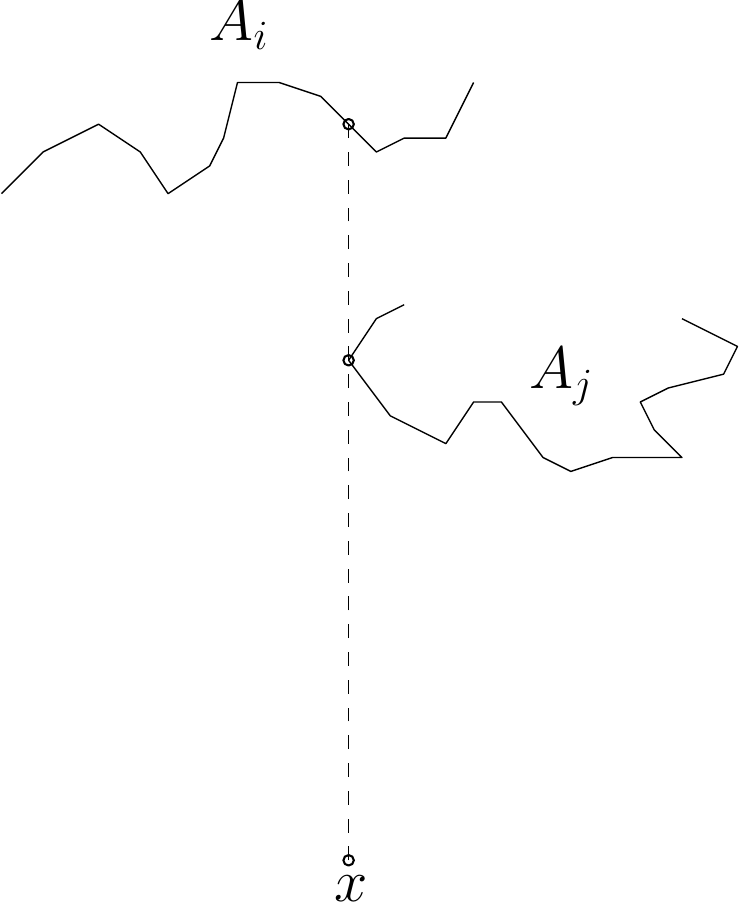}
\caption{\label{fig:left_tangent}The point $x$ has a left tangent to $A_j$.}
\end{minipage}
\hfill
\begin{minipage}[b]{0.48\linewidth}
\centering
\includegraphics[scale=.7]{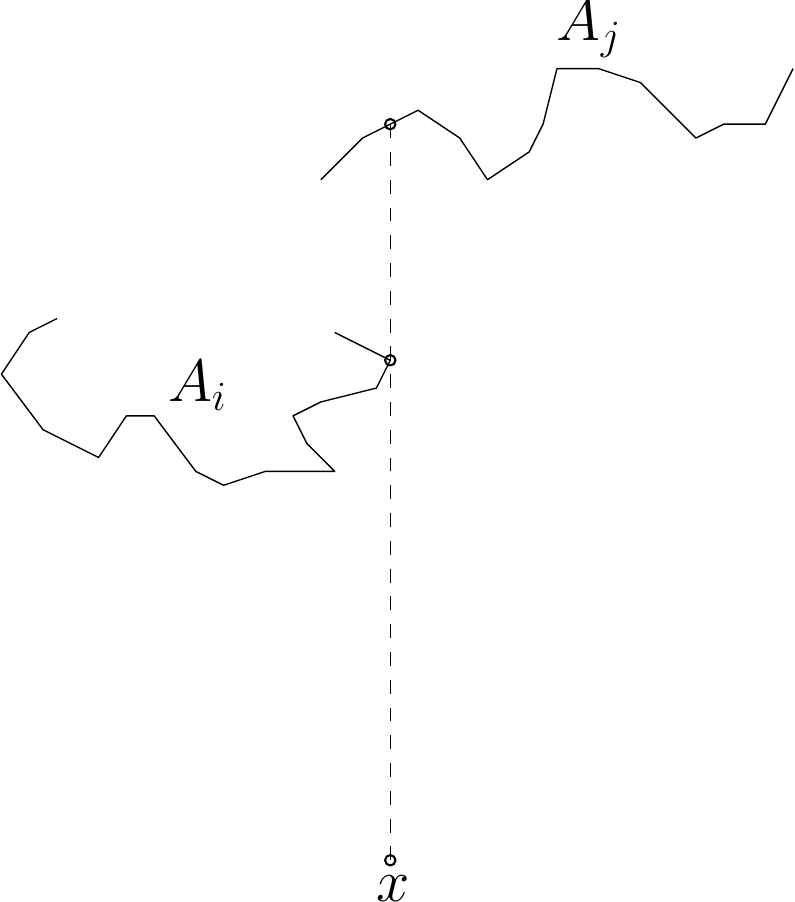}
\caption{\label{fig:right_tangent}The point $x$ has a right tangent to $A_i$.}
\end{minipage}

\end{figure}

\begin{figure}
\centering
\includegraphics[scale=.7]{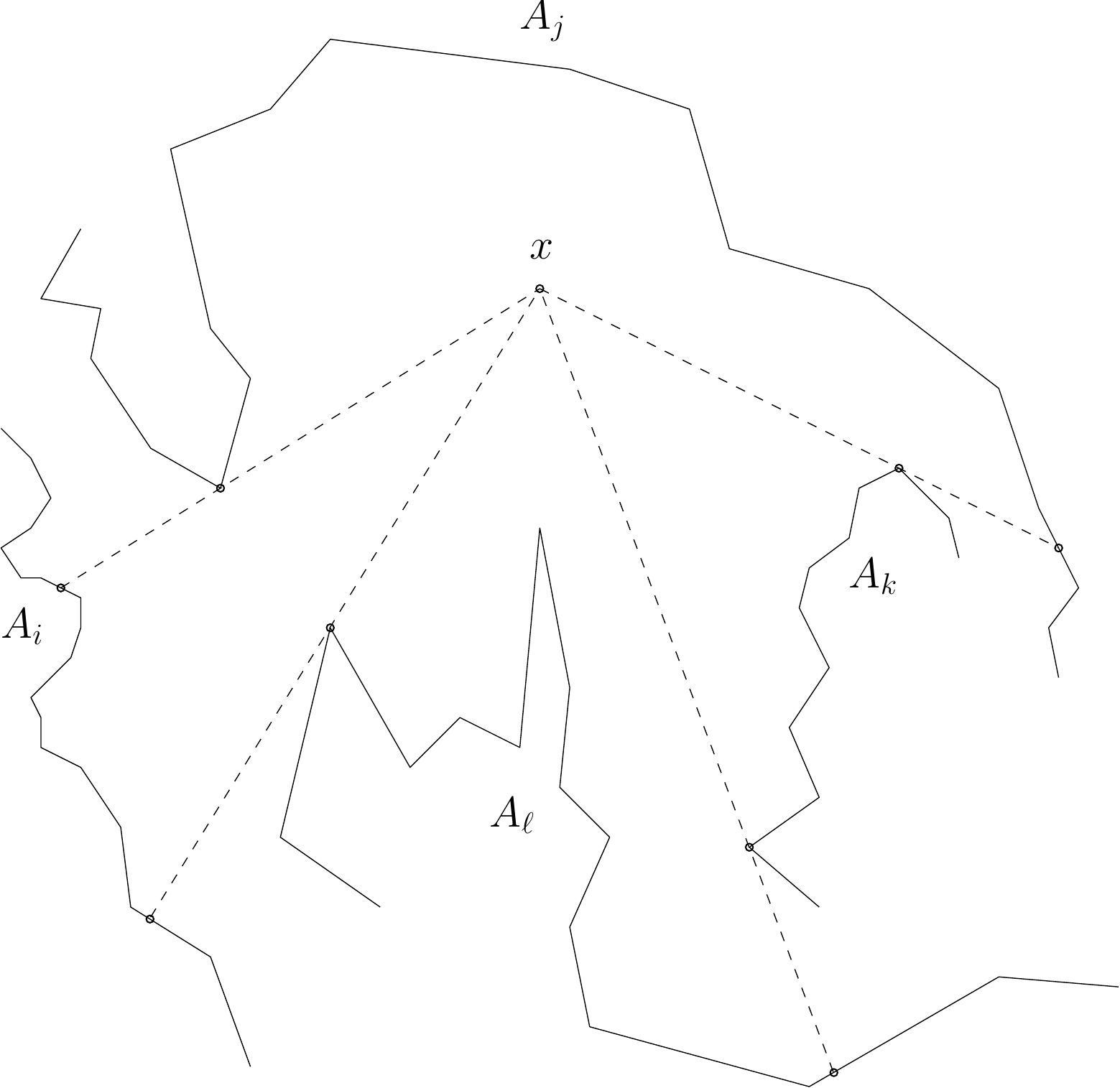}
\caption{\label{fig:cyclic_tangents}The point $x$ has no tangent to $A_i$, a left tangent to $A_j$, both a left and right tangent to $A_k$, and a right tangent to $A_\ell$.  $A_j$ owns $x$.}
\end{figure}

\begin{lemma}\label{lem:tangent_helper}
Let $A_i$, $A_j$, $A_k$ be fragments that are seen by $x$ consecutively in clockwise order.  If $x$ has a left tangent to $A_j$, and the combined angle of $A_j$ and $A_k$ at $x$ is no more than $\pi$, then $x$ sees a guard in $G(A_j,A_k)$.  Symmetrically, if $x$ has a right tangent to $A_j$, and the combined angle of $A_i$ and $A_j$ at $x$ is no more than $\pi$, then $x$ sees a guard in $G(A_i,A_j)$.
\end{lemma}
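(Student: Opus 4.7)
The plan is to prove the left-tangent case; the right-tangent case follows by a mirror argument. Let $v$ be the reflex vertex of $A_j$ witnessing the left tangent from $x$, so $x$ sees $v$ and $v$ is the counterclockwise-most visible point of $A_j$ in $x$'s angular sweep. I would then case-split on the transition type from $A_j$ to $A_k$ in $x$'s clockwise visibility, using the trichotomy enumerated immediately before the lemma.

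In the common-endpoint case, $x$ sees a shared endpoint $e$ of $A_j$ and $A_k$; then $e$ is the clockwise-last point of $A_j$ and is visible from $A_k$ trivially (as a point of $A_k$ itself), so $e$ is the last point of $A_j$ seen from $A_k$ and hence $e \in G(A_j, A_k)$. In the left-tangent-at-$A_k$ case there is a reflex vertex $u \in A_k$ realizing the tangent; the ray from $x$ through $u$, extended past $u$, stays inside $P$ and first meets $\partial P$ at some $w \in A_j$, giving a mutually visible pair $(u, w)$ collinear with $x$. The goal is then to argue that $u$ is the first point of $A_k$ seen from $A_j$, placing $u$ in $G(A_j, A_k)$. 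The right-tangent-at-$A_j$ subcase is handled symmetrically: the transition reflex vertex $w \in A_j$ sees a point of $A_k$ along the extended ray, and $w$ is shown to be the last point of $A_j$ seen from $A_k$.

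The heart of the proof is the extremality claim in the two tangent subcases. For the left-tangent-at-$A_k$ subcase, the combined-angle hypothesis places $A_j \cup A_k$ inside a closed half-plane at $x$; in particular $A_j$ lies entirely on the side of the tangent line $xu$ from which $A_k$ is locally occluded by $u$. Suppose for contradiction a point $p$ of $A_k$ lying strictly clockwise-earlier than $u$ on $\partial P$ were visible from some $p' \in A_j$. The segment $p'p$ would have to cross the tangent line $xu$, and locally near $u$ the reflex structure forces $\partial P$ on the $A_k$-side to bend into the half-plane containing $A_j$; this in turn forces $p'p$ to leave $P$ near $u$, contradicting the visibility of $p$ from $p'$. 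Hence $u$ is the first point of $A_k$ seen from $A_j$, and $x$ sees it by construction.

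The main obstacle is making the local-to-global argument of the last paragraph rigorous in the presence of possibly many nested reflex features on $\partial P$. The angle hypothesis is essential precisely to rule out long, wrap-around visibility paths from $A_j$ to earlier points of $A_k$ that bypass the tangent barrier at $u$; without it, $A_j$ could subtend more than a half-turn at $x$ and thereby send lines of sight into any pocket of $\partial P$, invalidating the extremality of $u$.
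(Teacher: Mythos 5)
Your overall structure (trichotomy on the $A_j\!\to\!A_k$ transition, trivial shared-endpoint case, symmetric treatment of the two tangent cases) is reasonable, but the central extremality claims in the two tangent subcases are false, and this is a genuine gap rather than a missing detail. In the left-tangent-at-$A_k$ subcase you claim the tangency vertex $u\in A_k$ is the first point of $A_k$ seen from $A_j$; symmetrically, that the tangency vertex $w\in A_j$ is the last point of $A_j$ seen from $A_k$. Neither follows. A fragment extends beyond the portion of it that $x$ sees: the window through the tangency vertex (the segment from $u$ to the point $w\in A_j$ it projects to) bounds a pocket hidden from $x$, and this pocket generally contains both the continuation of $A_j$ past $w$ and the initial (clockwise-earlier) portion of $A_k$ before $u$. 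Those two hidden pieces can see each other inside the pocket --- for instance, if $A_j$ and $A_k$ share an endpoint lying in the pocket, that endpoint is already a point of $A_k$ earlier than $u$ that is seen from $A_j$. So the extreme point of visibility you want to place a guard at can lie deep in the pocket, invisible from $x$. Your attempted rescue via the angle hypothesis does not work either: the ``combined angle of $A_j$ and $A_k$ at $x$'' constrains only the directions in which $x$ \emph{sees} those fragments, so it does not place $A_j\cup A_k$ in a half-plane at $x$; the hidden parts of the fragments are unconstrained, and the purported crossing/contradiction argument at $u$ never gets off the ground.

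The paper sidesteps exactly this trap by aiming at a different member of $G(A_j,A_k)$: it takes $q$ to be the \emph{first point on $A_j$ seen from $A_k$}, lets $p_L,p_R$ be the first and last points of $A_j$ seen by $x$, and shows that $q$ must be a vertex of the geodesic between $p_L$ and $p_R$ (if $q$ lay strictly between consecutive geodesic vertices, one of those vertices would be an earlier point of $A_j$ seen from $A_k$); since $x$ sees every vertex of that geodesic, $x$ sees $q$. The angle hypothesis is used only to rule out $q$ preceding $p_L$, i.e.\ to prevent a line of sight from $A_k$ ``passing behind'' $x$, and the left-tangent hypothesis guarantees $p_L$ is a vertex so that the relevant extreme points are legitimate guard locations. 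To repair your proof you would have to switch to this guard (or otherwise argue about extreme points measured along $A_j$ within the span $[p_L,p_R]$), not the tangency vertex on the occluding side.
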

\begin{proof}
We can assume w.l.o.g.~that $x$ has a left tangent to $A_j$ since the proof of the other case is symmetric.  There are now two cases we have to deal with, depending on whether $x$ has a right tangent to $A_j$ (case 1) or a left tangent to $A_k$ (case 2).  Define $p_L$ and $p_R$ respectively as the first and last points on $A_j$ seen by $x$.  Observe that $x$ must see every vertex on the geodesic between $p_L$ and $p_R$.  Let $q$ be the first point on $A_j$ seen from $A_k$.  In both cases 1 and 2 (see Figures \ref{fig:case_1} and \ref{fig:case_2}), $q$ must be a vertex of the geodesic between $p_L$ and $p_R$.  This can be shown by contradiction; if $q$ lies between consecutive vertices of this geodesic then those two consecutive vertices must also be seen from $A_k$, and one of them comes before $q$.

The restriction that the combined angle of $A_j$ and $A_k$ at $x$ is no more than $\pi$ is necessary to ensure that the geodesic of interest from $A_k$ to $A_j$ does not `pass behind' $x$ to see a point on $A_j$ before $p_L$.

It should be emphasized that, since there is a left tangent to $A_j$, $p_L$ will always be a vertex.  Also, if $p_R$ is not a vertex it will not be the first point on $A_j$ seen from $A_k$.
\end{proof}

\begin{figure}
\begin{minipage}[b]{0.48\linewidth}
\centering
\includegraphics[scale=.7]{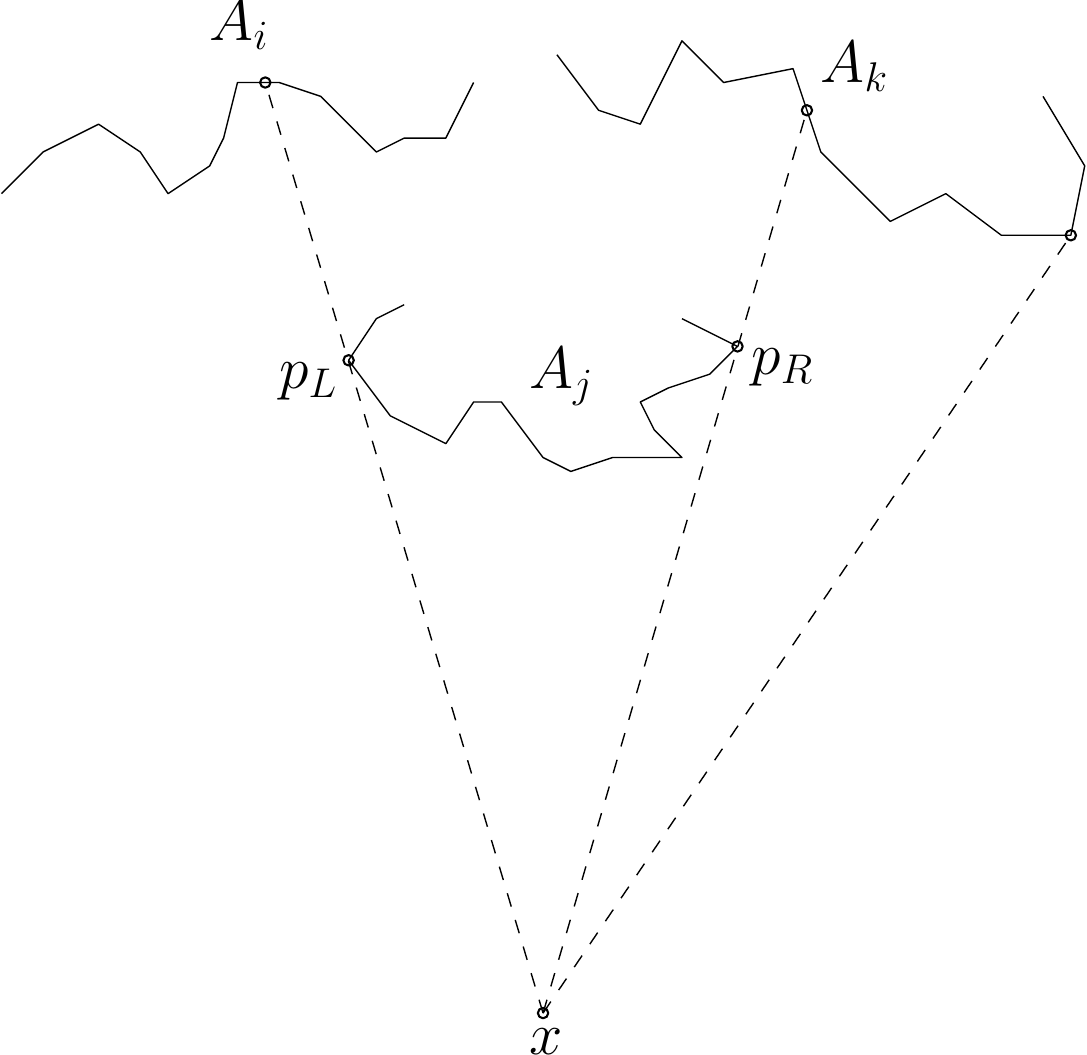}
\caption{\label{fig:case_1}Case 1 in the proof of Lemma \ref{lem:tangent_helper}.  The point $x$ has a left tangent and a right tangent to $A_j$.}
\end{minipage}
\hfill
\begin{minipage}[b]{0.48\linewidth}
\centering
\includegraphics[scale=.7]{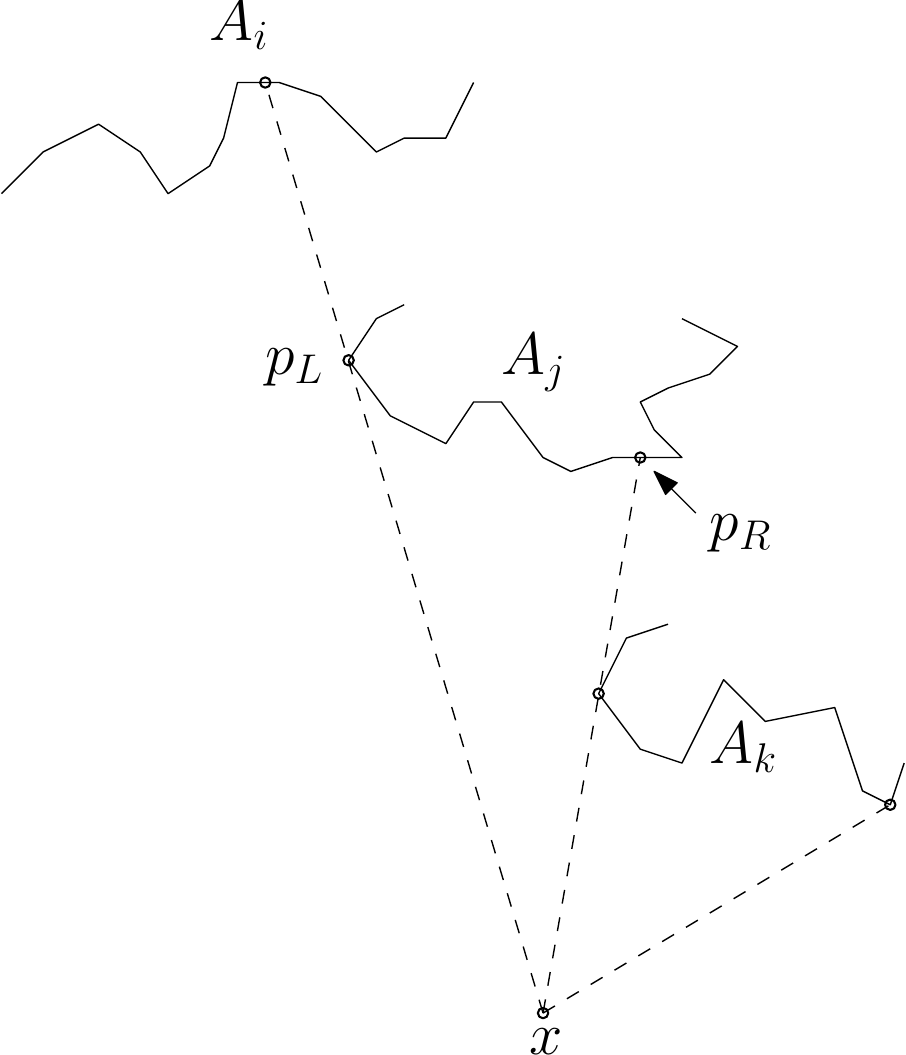}
\caption{\label{fig:case_2}Case 2 in the proof of Lemma \ref{lem:tangent_helper}.  The point $x$ has a left tangent to $A_j$ and a left tangent to $A_k$.}
\end{minipage}

\end{figure}

The proof of Lemma \ref{lem:AP_fragment_bound} is now fairly straightforward.
\begin{proof}[Proof of Lemma \ref{lem:AP_fragment_bound}]
Let $x$ be a point that sees at least 5 fragments.  Assume $x$ is not a fragment endpoint, otherwise it is itself a guard in $S_{AP}$.  If we have a directed graph whose underlying undirected graph is a cycle, then either we have a directed cycle or we have a vertex with in-degree 2.  By the same principle, either some fragment seen by $x$ has no tangent from $x$, or every fragment seen by $x$ has a left tangent from $x$ (or every one has a right tangent, which can be handled symmetrically).

If a fragment seen by $x$ has no tangent from $x$, call such a fragment $A_0$ and let $A_{-2}$, $A_{-1}$, $A_0$, $A_1$, $A_2$ be fragments seen by $x$ in clockwise order.  If the combined angle at $x$ of $A_{-2}$ and $A_{-1}$ is more than $\pi$, the combined angle of $A_1$ and $A_2$ is less than $\pi$.  So we can apply Lemma \ref{lem:tangent_helper} with one of the two pairs of fragments to show that $x$ is seen by a guard.

If every fragment seen by $x$ has a left tangent from $x$, then we can apply Lemma \ref{lem:tangent_helper} using two consecutive fragments with a combined angle at $x$ of less than $\pi$.

\end{proof}

Before we move on we will prove one more helpful lemma.
\begin{lemma}\label{lem:non-tangent}
The number of fragments seen by an unguarded point $x$ that do not have a tangent from $x$ is at most 1.
\end{lemma}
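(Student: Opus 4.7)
The plan is a proof by contradiction combining Lemma \ref{lem:AP_fragment_bound} with a tangent-counting argument and a fourfold application of Lemma \ref{lem:tangent_helper}. Suppose some unguarded point $x$ sees two fragments with no tangent from $x$. By Lemma \ref{lem:AP_fragment_bound}, $x$ sees at most $k\le 4$ fragments.

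The first step is a counting argument. Since $x$ is unguarded, no transition between consecutive visible fragments in the clockwise cyclic order at $x$ can be of type 1, because any visible shared endpoint of two fragments is itself an element of $S_{AP}$. Thus every one of the $k$ transitions is a left or right tangent attributed to exactly one of its two fragments. Letting $n_i$ denote the number of visible fragments with exactly $i$ tangents, we have $n_1+2n_2 = k = n_0+n_1+n_2$, so $n_0 = n_2$. Moreover, two no-tangent fragments cannot be adjacent in the cyclic order, as the transition between them would have to be a tangent to one of them. Combined with $n_0\ge 2$ and $k\le 4$, this forces $k=4$, $(n_0,n_1,n_2)=(2,0,2)$, and the visible fragments alternate as $A,C,B,D$ clockwise around $x$, with $A,B$ having no tangents and $C,D$ each having two.

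The tangent attributions are then uniquely determined by the no-tangent property of $A$ and $B$: the transitions $A\to C$, $C\to B$, $B\to D$, $D\to A$ must be, respectively, a left tangent to $C$, a right tangent to $C$, a left tangent to $D$, and a right tangent to $D$. Writing $\angle X$ for the angular extent of fragment $X$ at $x$, applying Lemma \ref{lem:tangent_helper} to each of these four tangents with the appropriate consecutive triple would produce an extremal guard seeing $x$ unless all four of $\angle C+\angle B$, $\angle A+\angle C$, $\angle D+\angle A$, $\angle B+\angle D$ strictly exceed $\pi$.

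Summing these four strict inequalities yields $2(\angle A+\angle B+\angle C+\angle D)>4\pi$. But the angular extents of the visible fragments around $x$ are pairwise disjoint and together fill the full $2\pi$, so $\angle A+\angle B+\angle C+\angle D\le 2\pi$, a contradiction. The delicate step is the tangent bookkeeping that pins down the alternating $A,C,B,D$ pattern and determines which fragment each transition's tangent is attributed to; once that structure is fixed, the four applications of Lemma \ref{lem:tangent_helper} and the angular sum produce the contradiction cleanly.
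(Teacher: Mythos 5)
Your proof is correct, but it takes a genuinely different route from the paper's. The paper argues directly from the two tangent-free fragments: it splits into two cases according to whether one of them \emph{owns} $x$ (sees $x$ in a sector of size at least $\pi$), and in each case applies Lemma \ref{lem:tangent_helper} a single time to the two fragments following a tangent-free fragment in clockwise order; it never invokes Lemma \ref{lem:AP_fragment_bound} and never needs to know how many fragments $x$ sees. You instead import the bound of Lemma \ref{lem:AP_fragment_bound}, run a counting argument on transitions to force the exact configuration ($k=4$, tangent-free and two-tangent fragments alternating), and then close with four applications of Lemma \ref{lem:tangent_helper} and an angle-sum averaging against the total of $2\pi$. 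This is legitimate: there is no circularity (Lemma \ref{lem:AP_fragment_bound} relies only on Lemma \ref{lem:tangent_helper}), the exclusion of type-1 transitions for an unguarded point is exactly as in the paper, and your four angle conditions and their sum are right, including the strictness needed for the contradiction. Two small caveats: your equality $n_1+2n_2=k$ presumes each transition is of exactly one of the three types; if a degenerate transition could count as both a left and a right tangent you only get $n_1+2n_2\ge k$, but the same conclusion $(n_0,n_1,n_2)=(2,0,2)$ still follows, so nothing breaks. You also implicitly use that each visible fragment occupies a single contiguous angular interval at $x$ (so it has exactly two incident transitions and a well-defined angular extent); this is the same implicit assumption underlying the paper's clockwise-ordering and transition taxonomy. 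Comparatively, the paper's proof is shorter and self-contained with respect to the ownership notion, while yours trades the ownership case analysis for an exhaustive structural classification plus averaging, at the cost of depending on the quadratic-net fragment bound.
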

\begin{proof}
Assume the contrary and let $A_0$ and $A_i$ be two such fragments.  If one such fragment owns $x$, assume it is $A_0$ and call the next two fragments seen by $x$ in the clockwise direction $A_1$ and $A_2$ respectively.  By Lemma \ref{lem:tangent_helper}, $x$ is seen by a guard in $G(A_1,A_2)$ so we reach a contradiction.  If no such fragment owns $x$ then assume w.l.o.g. that, over the fragments seen by $x$ between $A_0$ and $A_i$ going clockwise, the combined angle at $x$ is less than $\pi$ (if this is not true it must be true going counterclockwise).  Again, $x$ is seen by a guard in $G(A_1,A_2)$ so we reach a contradiction.
\end{proof}

\section{Hierarchical fragmentation}\label{sec:hierarchical}

In the last section we showed how a quadratic number of guards (\ie{} $O(1/\eps^2)$) could be placed to ensure that any unguarded point sees at most 4 fragments.  In this section we discuss how hierarchical fragmentation can be used to reduce the number of guards required to $O\left(\frac{1}{\eps}\log\log\frac{1}{\eps}\right)$.  We will use $S_{HF}$ to denote the guarding set constructed in this section.  It should be clear that these $\eps$-nets can be constructed in polynomial time.

We can consider the hierarchy as represented by a tree.  At the root there is a single fragment representing the entire perimeter of the polygon.  This root fragment is broken up into a certain number of child fragments.  Fragmentation continues recursively until a specified depth $t$ is reached.  We will set $t=\ceil{\log\log\frac{1}{\eps}}$.  The \emph{fragmentation factor} (equivalently, the branching factor of the corresponding tree) is not constant, but rather depends on both $t$ and the level in the hierarchy.  The fragmentation factor generally decreases as the level of the tree increases.  Specifically, if $b_i$ is the fragmentation factor at the $i^{\rm th}$ step, we have
$$
b_i = \left\{
\begin{array}{lcl}
2^{2^{t-1}+1}\cdot 4t\cdot 2^{1-t} \cdot \alpha&,&i=1\\
2^{2^{t-i}+1}&,&1<i\leq t~,
\end{array}
\right.
$$
where $\alpha \leq 1$ is a term introduced only to deal with an issue arising from ceilings and double exponentials, namely the fact that $2^{2^{\ceil{\log\log1/\eps}}}$ is not in $O(1/\eps)$.  $\alpha$ is specified in (\ref{eqn:alpha}) later in this section.

If $f_i$ is the total number of fragments after the $i^{\rm th}$ fragmentation step, this gives us
$$
f_i = \left\{
\begin{array}{ccl}
1 &,&i=0\\
4t\cdot 2^{2^t - 2^{t-i} -t +i +1} \cdot \alpha&,&0<i\leq t \\
4t\cdot 2^{2^t} \cdot \alpha&,&i = t~,
\end{array}
\right.
$$
since
\begin{eqnarray*}
f_i &=& \prod_{j=1}^ib_j \\
&=& 4t\cdot 2^{1-t}\cdot \alpha \cdot \prod_{j=1}^i 2^{2^{t-j}+1} \\
&=& 4t\cdot 2^{1-t+\sum_{j=1}^i (2^{t-j}+1)} \cdot \alpha \\
&=& 4t\cdot 2^{2^t - 2^{t-i} -t +i +1} \cdot \alpha~~.
\end{eqnarray*}

Our algorithm will place guards at all pairs of \emph{sibling fragments}, \ie{} fragments having the same parent fragment.  For the purposes of this guard placement, the complement of the parent fragment, \ie{} the subset of $G$ outside the parent fragment, will be considered a dummy child fragment.  That is, it will be considered a child fragment when placing guards, but not when counting the number of child fragments seen from some point $x$ as in the statement of Corollary \ref{cor:children} or in the proof of Lemma \ref{lem:HF_fragment_bound}.  To denote the complement of a fragment $A$ we use $\overline{A}$.  Considering $\overline{A}$ to be a child of $A$ when placing guards allows us to consider the children of $A$ as if they were fragments with guards placed for all pairs.  For example, we can obtain the following corollary from Lemmas \ref{lem:AP_fragment_bound} and \ref{lem:non-tangent}.

\begin{cor}\label{cor:children}
For an unguarded point $x$ and a fragment $A$, the number of child fragments of $A$ seen by $x$ is at most 3, and at most one of these child fragments does not have a tangent from $x$.
\end{cor}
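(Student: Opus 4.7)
The plan is to apply Lemmas \ref{lem:AP_fragment_bound} and \ref{lem:non-tangent} to the partition of $\partial P$ consisting of the children of $A$ together with the dummy child $\overline{A}$. By the hierarchical construction, every pair of these siblings has its extremal guards placed, so this collection is exactly an instance of the all-pairs extremal guarding setup of Section \ref{sec:quadratic}, just applied to a sub-partition of $\partial P$ rather than to the top-level partition of size $m = 4/\eps$. Both earlier lemmas therefore apply to it without modification.

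For the first claim, I would invoke Lemma \ref{lem:AP_fragment_bound}: the unguarded point $x$ sees at most four fragments of this sub-partition. The key bookkeeping step is to argue that $\overline{A}$ always accounts for one of these four. In the typical case this is immediate, because $x$ literally sees some point of $\overline{A}$. In the remaining case, where $x$'s visibility on $\partial P$ is entirely contained in $A$, the occluded wrap-around angular sector around $x$ stands in for $\overline{A}$: the reflex-vertex tangent events at its boundary play the same role as tangents to $\overline{A}$ in the cyclic argument of Lemma \ref{lem:AP_fragment_bound}'s proof. Either way, at most three of the visible fragments are actual children of $A$. For the second claim, Lemma \ref{lem:non-tangent} applied to the same sub-partition immediately gives that at most one seen fragment lacks a tangent from $x$, and hence at most one visible child of $A$ does.

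The main obstacle I expect is this ``implicit $\overline{A}$'' subcase: I need to verify carefully that the directed-graph tangent-cycle argument of Lemma \ref{lem:AP_fragment_bound} still produces a contradiction when four children of $A$ are seen but $\overline{A}$ is not. My intended route is to treat the two reflex-vertex tangent events bounding the occluded wrap-around sector as the ``incoming'' and ``outgoing'' tangents attributable to $\overline{A}$, so that the in-degree analysis and the angle-averaging step from the proof of Lemma \ref{lem:AP_fragment_bound} apply to a five-vertex cycle and force a consecutive pair of visible children whose combined angle at $x$ is at most $\pi$ and one of which has a suitably directed tangent; Lemma \ref{lem:tangent_helper} then yields the desired contradiction.
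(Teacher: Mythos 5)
Your setup is exactly the paper's: the children of $A$ together with the dummy sibling $\overline{A}$ form a fragment partition of $\partial P$ with all pairwise extremal guards placed, so Lemma \ref{lem:AP_fragment_bound} bounds the number of these fragments seen by an unguarded $x$ by $4$ and Lemma \ref{lem:non-tangent} bounds the number without a tangent by $1$; the second claim of the corollary follows as you say, and so does the first whenever $x$ actually sees a point of $\overline{A}$. You are also right that the only delicate point is the case in which $x$ sees nothing of $\overline{A}$ (the paper itself passes over this silently). The problem is that your proposed repair of that case rests on a false geometric picture. If $x$ is interior to $P$ and sees no point of $\overline{A}$, there is no ``occluded wrap-around angular sector'' at $x$: every ray from $x$ hits the boundary, so the visible portions of the children of $A$ tile the full angle $2\pi$ around $x$. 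The arc $\overline{A}$ disappears inside a single occluded pocket that is skipped at one window ray, and by the paper's own classification that single transition is a left or right tangent to one of the \emph{children} (the one containing the reflex vertex or the one occluding), not a pair of tangent events attributable to $\overline{A}$. (If $x$ lies on $\partial P$, the blind half-plane behind the edge containing $x$ is bounded by the edge directions, which belong to the child containing $x$, again not to $\overline{A}$.) So the five-vertex cycle you want never materializes, and inserting $\overline{A}$ as a zero-angle dummy vertex does not help the averaging.

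Concretely, with only four visible children whose angles sum to $2\pi$, the in-degree/angle argument of Lemma \ref{lem:AP_fragment_bound} fails in exactly the sub-case where one child has no tangent from $x$: if that child's angle is tiny and the other three angles are each close to $2\pi/3$, neither of the two consecutive pairs required by Lemma \ref{lem:tangent_helper} has combined angle at most $\pi$, so no contradiction is produced. (This is precisely why Lemma \ref{lem:AP_fragment_bound} gives $4$ and not $3$.) Nor can you apply Lemma \ref{lem:tangent_helper} with $\overline{A}$ playing the role of one of the three fragments, since its statement and proof require all three fragments to be seen by $x$. Ruling out ``four children seen, $\overline{A}$ unseen'' therefore needs an argument you have not supplied -- for instance one exploiting the extremal guards of the pairs $(\mbox{child},\overline{A})$, which lie on the children themselves and may be visible to $x$ even though $\overline{A}$ is not. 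As it stands, the first claim of the corollary is not established in this case by your proposal.
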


The total number of guards placed will be
$$
|S_{HF}| 
~~\leq~~ 4 \sum_{i=1}^t {b_i+1 \choose 2} f_{i-1} 
~~\leq~~ 4 \sum_{i=1}^t b_i^2 f_{i-1}~~.
$$
If $t\geq 6$ we have $b_i\leq 2^{2^{t-i}+1}$ for all values of $i$.  This gives us
\begin{eqnarray*}
|S_{HF}| &\leq& 4 \alpha \sum_{i=1}^t 2^{2\left(2^{t-i}+1\right)} \cdot 4t\cdot 2^{2^t - 2^{t-i+1} -t +i} \\
&=& 16t \alpha \sum_{i=1}^t 2^{2^t -t +i +2} \\
&=& 16t \alpha\cdot 2^{2^t-t+3} (2^t-1) \\
&<& 16t \alpha\cdot 2^{2^t+3} \\
&=& 128t \alpha\cdot 2^{2^t}~.
\end{eqnarray*}

Recall that $t = \ceil{\log \log \frac 1 \eps}$.  We need to define $\alpha$ in a way that ensures $b_1$ is an integer and that ensures the following two equations hold:

\begin{equation}\label{eqn:HF_num_guards}
|S_{HF}| ~=~ O\kern-2pt\left(\frac{1}{\eps} \log \log \frac{1}{\eps}\right)
\end{equation}
\begin{equation}\label{eqn:HF_small_fragments}
\frac{f_t}{4t} ~\geq~ \frac{1}{\eps}~.
\end{equation}

To satisfy these three criteria, it suffices to set
\begin{equation}\label{eqn:alpha}
\alpha 
~=~ \frac{\ceil{ 2^{2^{t-1}+1} \cdot 4t \cdot 2^{-t} \cdot 2^{\log(1/\eps) - 2^t} }}{ 2^{2^{t-1}+1}\cdot 4t\cdot 2^{-t} }
~=~ \frac{\ceil{ 4t\cdot 2^{\log(1/\eps)+1-t-2^{t-1}}}}{ 4t\cdot 2^{2^{t-1}+1-t} }~.
\end{equation}

We must now provide a generalization of Lemma \ref{lem:AP_fragment_bound} that works with our hierarchical fragmentation.
\begin{lemma}\label{lem:HF_fragment_bound}
Any point not guarded by $S_{HF}$ sees at most $4i$ fragments at level $i$.
\end{lemma}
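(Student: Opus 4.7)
My plan is to prove Lemma~\ref{lem:HF_fragment_bound} by induction on $i$. The base case $i=1$ is immediate: the root is the entire perimeter and its $b_1$ children are all siblings, so the extremal guards placed between them form precisely the all-pairs configuration of Section~\ref{sec:quadratic}. Lemma~\ref{lem:AP_fragment_bound} then applies directly to yield $k_1\le 4 = 4\cdot 1$, where $k_1$ denotes the number of level-$1$ fragments visible from $x$.

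For the inductive step, assume $k_{i-1}\le 4(i-1)$, and let $B_1,\dots,B_{k_{i-1}}$ be the level-$(i-1)$ fragments visible from $x$. For each parent $B_j$, its children (together with the dummy child $\overline{B_j}$) form an all-pairs extremally guarded family, so Corollary~\ref{cor:children} bounds the number of real children of $B_j$ visible from $x$ by $c_j\le 3$, with at most one of them lacking a tangent from $x$. Writing $k_i=\sum_{j=1}^{k_{i-1}}c_j$, the naive per-parent bound $c_j\le 3$ only yields $k_i\le 3k_{i-1}\le 12(i-1)$, which is far too weak. The crucial refinement is the global inequality $\sum_j(c_j-1)\le 4$, which, combined with the induction hypothesis, gives $k_i\le k_{i-1}+4\le 4i$, as required.

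To establish $\sum_j(c_j-1)\le 4$, I would use a tangent-accounting argument. Because $x$ is unguarded, $x$ sees no fragment endpoint (every such endpoint is a shared sibling boundary at some level, hence lies in $S_{HF}$), so each contiguous visible arc of $x$ lies within a single level-$i$ child. Consequently $c_j\ge 2$ forces $x$ to have at least two disjoint visible arcs inside $B_j$, separated by tangents of $x$ at reflex vertices that sit between distinct children of $B_j$. Applying Lemma~\ref{lem:tangent_helper} within each $B_j$'s all-pairs configuration associates these internal splittings with specific tangent events; and the cyclic global structure of $x$'s view of $P$, via the same kind of ``in-degree $2$ or directed cycle'' argument used in Lemma~\ref{lem:AP_fragment_bound}, caps the total number of such splitting events at $4$, mirroring the constant that controls the base case.

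The main obstacle is to carry out the charging rigorously — that is, to exhibit an injection from the $\sum_j(c_j-1)$ units of excess into a set of size at most $4$. This requires a careful case analysis on the tangent type (none, left-only, right-only, both) of each $B_j$ from $x$, and in particular ensuring that a tangent shared across the boundary between two adjacent parents $B_j,B_{j+1}$ is charged only once. The per-parent bookkeeping from Corollary~\ref{cor:children} is routine; the global constant $4$ is the subtle part, and I expect it to follow by combining Lemma~\ref{lem:tangent_helper} applied inside each split parent with a cyclic angular-sum argument on $x$'s view, analogous to the proof of Lemma~\ref{lem:AP_fragment_bound}.
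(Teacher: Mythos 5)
You have identified the right reduction---the paper's proof also comes down to showing that the number of visible fragments grows by at most $4$ per level, i.e.\ $k_i\le k_{i-1}+4$, via a bound of two on the number of parents per level with more than one visible child---but the proposal leaves exactly that global bound unproved, and it is the heart of the lemma. Your charging sketch does not confront the real difficulty: guards exist only between \emph{sibling} fragments (plus the dummy complement $\overline{A}$), so the ``in-degree $2$ or directed cycle'' argument of Lemma~\ref{lem:AP_fragment_bound} cannot be run across a whole level, and the visible children of different parents interleave in the cyclic view around $x$. In particular, your step ``$c_j\ge 2$ forces two visible arcs inside $B_j$ separated by tangents at reflex vertices sitting between distinct children of $B_j$'' is not justified: the splitting of $B_j$'s visible portion may be caused by, and the relevant tangent may belong to, a fragment with a different parent, where no guard set $G(\cdot,\cdot)$ is available and Lemma~\ref{lem:tangent_helper} gives nothing. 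The angle condition (combined angle at most $\pi$) in Lemma~\ref{lem:tangent_helper} also cannot be discharged by a purely local argument inside each $B_j$, which is why a ``cyclic angular-sum'' charging across parents is not routine.

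The paper proves the per-level bound structurally rather than by charging. The key observation missing from your plan is that a branching parent must either \emph{own} $x$ (see $x$ in a sector of angle at least $\pi$) or have no tangent from $x$: if $A$ has, say, a left tangent from $x$ and does not own $x$, then its leftmost visible child inherits that tangent and Lemma~\ref{lem:tangent_helper} shows $x$ would see a guard if a second child were visible. Since at most one fragment owns $x$, this limits the candidates for branching, but one still needs the paper's six-type case analysis (tracking ownership and tangency, with $\overline{A}$ treated as a dummy child) and the ``fruitful fragment'' bookkeeping to show that these two kinds of branching cannot proliferate as one descends the hierarchy---e.g.\ that a no-tangent (type (3)) fragment has at most one no-tangent child, and that only the fragment owning $x$ with exactly one tangent can have two fruitful children, a configuration that cannot recur below the root. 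Your base case, your observation that $x$ sees no fragment endpoint, and the target inequality are all correct, but, as you yourself acknowledge, the injection from the excess $\sum_j(c_j-1)$ into a set of size $4$ is not exhibited, and without the ownership-based analysis it is not clear the proposed charging can be made to work.
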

\noindent Applying this with $i=t$ and using (\ref{eqn:HF_num_guards}) and (\ref{eqn:HF_small_fragments}), we get
\begin{cor}
$S_{HF}$ is an $\eps$-net of size $O\kern-2pt\left(\frac{1}{\eps} \log \log \frac{1}{\eps}\right)$.
\end{cor}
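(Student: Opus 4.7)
The plan is to observe that the two pieces needed for this corollary are already assembled: equation (\ref{eqn:HF_num_guards}) gives the size bound $|S_{HF}| = O\kern-2pt\left(\frac{1}{\eps}\log\log\frac{1}{\eps}\right)$, and Lemma \ref{lem:HF_fragment_bound} at depth $i = t$ controls the visible weight at any unguarded point. The only remaining work is to translate the fragment-count bound into the $\eps$-net condition via the weight function.

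By construction the hierarchical fragmentation subdivides $\partial P$ so that every level-$t$ fragment carries equal weight, namely $w(G)/f_t$; the root fragment is assigned the total weight $w(G)$ and each fragmentation step distributes a parent's weight evenly across its children, exactly as in the construction of Section \ref{sec:quadratic}. Consequently, for any point $p \in T$ not seen by $S_{HF}$, Lemma \ref{lem:HF_fragment_bound} with $i = t$ implies that the set $S_p$ of potential guards that see $p$ is contained in at most $4t$ level-$t$ fragments, so
\[
 w(S_p) \;\leq\; 4t \cdot \frac{w(G)}{f_t} \;=\; \frac{4t}{f_t}\cdot w(G).
\]
Applying the calibration (\ref{eqn:HF_small_fragments}), namely $f_t/(4t) \geq 1/\eps$, this quantity is at most $\eps \cdot w(G)$. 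Hence every set in $\SSS$ of weight exceeding $\eps \cdot w(G)$ must contain a guard of $S_{HF}$, which is the $\eps$-net property.

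Combining the weight bound from the previous paragraph with the size bound (\ref{eqn:HF_num_guards}) finishes the proof. The only real conceptual step is the translation between ``number of level-$t$ fragments seen'' and ``total weight of visible potential guards,'' which is immediate once one observes that fragmentation equalizes weights at each level. The numerical calibration of $\alpha$, $b_1$, and $f_t$ was already done in (\ref{eqn:alpha}) precisely so that (\ref{eqn:HF_num_guards}) and (\ref{eqn:HF_small_fragments}) hold simultaneously, so no further tuning is required. I do not anticipate any real obstacle at this stage: Lemma \ref{lem:HF_fragment_bound} does the geometric heavy lifting and the preceding size count handles the cardinality, leaving only a one-line arithmetic verification.
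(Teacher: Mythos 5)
Your proposal is correct and follows exactly the paper's intended argument: the paper derives the corollary in one line by applying Lemma \ref{lem:HF_fragment_bound} with $i=t$ together with (\ref{eqn:HF_num_guards}) and (\ref{eqn:HF_small_fragments}), and your write-up simply makes explicit the weight bookkeeping (equal-weight level-$t$ fragments, so an unguarded point sees weight at most $4t\cdot w(G)/f_t \leq \eps\, w(G)$) that the paper leaves implicit.
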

\begin{proof}[Proof of Lemma \ref{lem:HF_fragment_bound}]
Let $x$ be a point that does not see any guard in $S_{HF}$.  From the tree associated with the hierarchical fragmentation, we consider the subtree of fragments that see $x$.  We define a \emph{branching fragment} as a fragment with multiple children seen by $x$ and we claim that at any level there are at most 2 branching fragments.  Corollary \ref{cor:children} tells us that any fragment has at most 3 children seen by $x$.  At level 1 there are at most 4 fragments seen by $x$, so it follows that the number of fragments seen by $x$ at level $i$ is at most $4i$.  We must now prove our claim that there are at most 2 branching fragments at any level.

First we note that a branching fragment either has no tangent from $x$ or owns $x$.  To see this, consider a fragment $A$ that has a tangent from $x$ and does not own $x$.  Assume w.l.o.g.\null{} that $x$ has a left tangent to $A$ and call the point of tangency $p_L$.  $x$ must then also have a left tangent to the child fragment $A_0$ of $A$ that contains $p_L$.  $A_0$ must be the leftmost child fragment of $A$ seen by $x$.  If $x$ sees another child fragment $A_1$ of $A$ to the right of $A_0$, then by Lemma \ref{lem:tangent_helper} it is seen by a guard in $G(A_0,A_1)$.

Consider now the following possibilities for a given fragment $A$.
\begin{enumerate}
\item $A$ is not seen by $x$.  Clearly $x$ cannot see any child fragments of $A$.

\item $A$ does not own $x$, and $x$ has a tangent to $A$.  $A$ then has exactly one child fragment that sees $x$, and this fragment is of type (2).

\item $A$ does not own $x$, and $x$ does not have a tangent to $A$.  By Corollary \ref{cor:children}, $x$ can see at most 3 child fragments of $A$.  At most one of these children is of type (3) and all others must be of type (1) or (2).

\item $A$ owns $x$ and has no tangents from $x$, \ie{} $\overline{A}$ has two tangents from $x$.  If a child of $A$ owns $x$ it must be the only child of $A$ that sees $x$, and this child is also of type (4).  Otherwise, $A$ would have a child fragment $A_i$ that is seen by $x$, does not own $x$, and is adjacent to $\overline{A}$.  $x$ would then be seen by a guard in $G(A_i,\overline{A_P})$.  Thus $A$ has at most one child that is not of type (1) or (2).

\item $A$ owns $x$ and has two tangents from $x$.  Because $\overline{A}$ is, in a sense, a `dummy' child of type (3), $A$ cannot have a real child of type (3) by the proof of Lemma \ref{lem:non-tangent}.  Further, if $A$ has a child $A_0$ that owns $x$, this child must also be of type (5).  Otherwise assume w.l.o.g. that $A_1$, immediately clockwise from $A_0$, has a left tangent from $x$.  Then, using $A_2$ to denote the fragment clockwise from $A_1$ ($A_2$ might be $\overline{A_P}$), $x$ is seen by $G(A_1,A_2)$.

\item $A$ owns $x$ and has exactly one tangent from $x$ (see Figure \ref{fig:fruitful_case_6}).  We consider how $A$ can have multiple children seen by $x$.  Assume w.l.o.g. that $\overline{A}$ has a right tangent.  If $A_{-1}$ is the child of $A$ seen by $x$ immediately counterclockwise from $\overline{A}$ then $A_{-1}$ must own $x$, otherwise $x$ is seen by $G(A_{-1},\overline{A})$.  If $A_1$ is the child of $A$ seen by $x$ immediately clockwise from $\overline{A}$ then $A_1$ cannot have a tangent from $x$ otherwise $x$ would be seen by $G(\overline{A}, A_1)$.  If $x$ can see $A_2$, a child of $A$ between $A_1$ and $A_{-1}$, then $x$ must have two tangents to $A_{-1}$ otherwise it would be seen by $G(A_1,A_2)$.

Therefore if $A$ has more than one child seen by $x$, there must one of type (3) and one of type (5), plus (possibly) a child of type (2).

\end{enumerate}

\begin{figure}
\centering
\includegraphics[scale=.6]{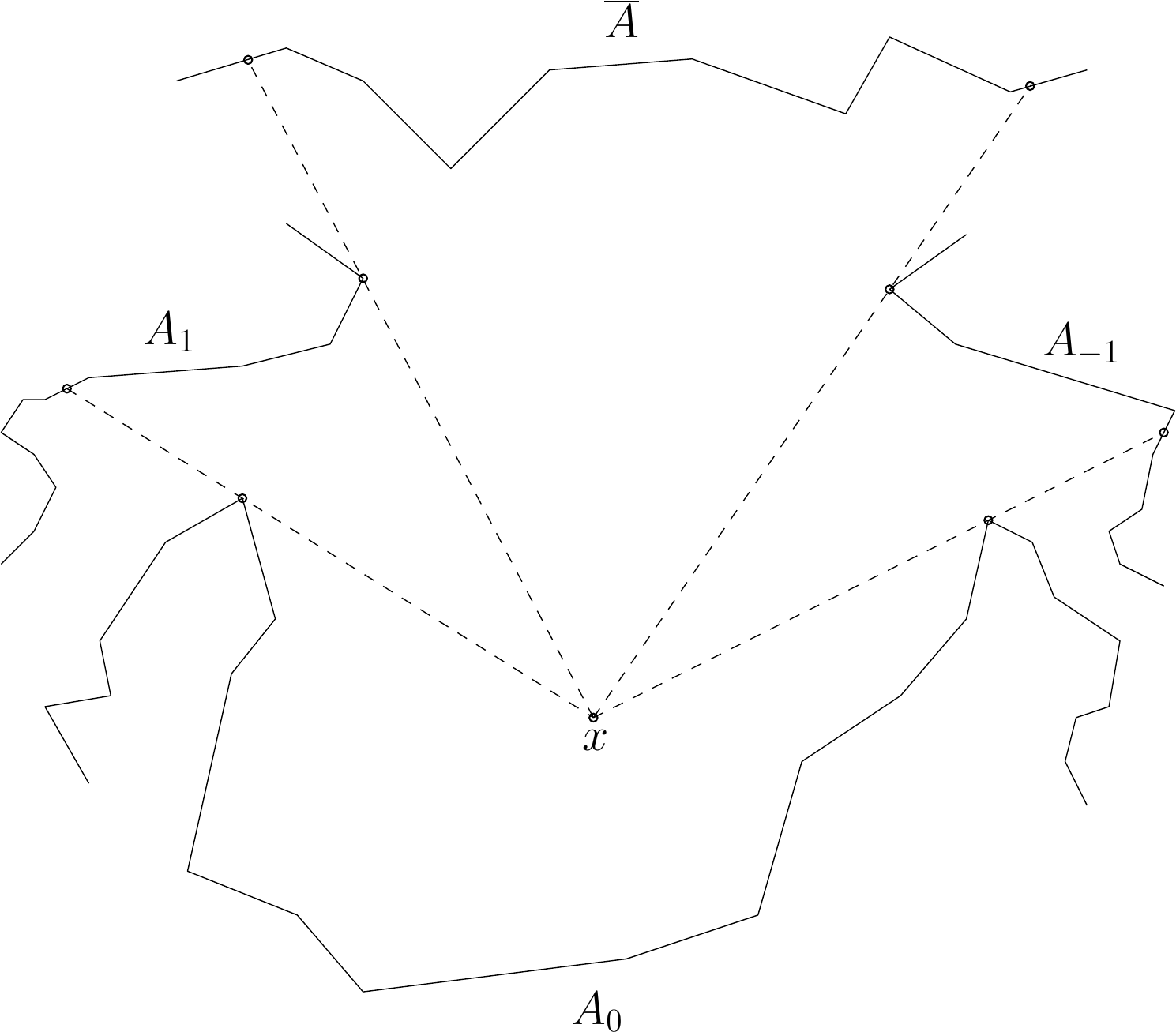}
\caption{\label{fig:fruitful_case_5}The only way a fragment of type (5) can have three children seen by $x$.}
\end{figure}

\begin{figure}
\centering
\includegraphics[scale=.6]{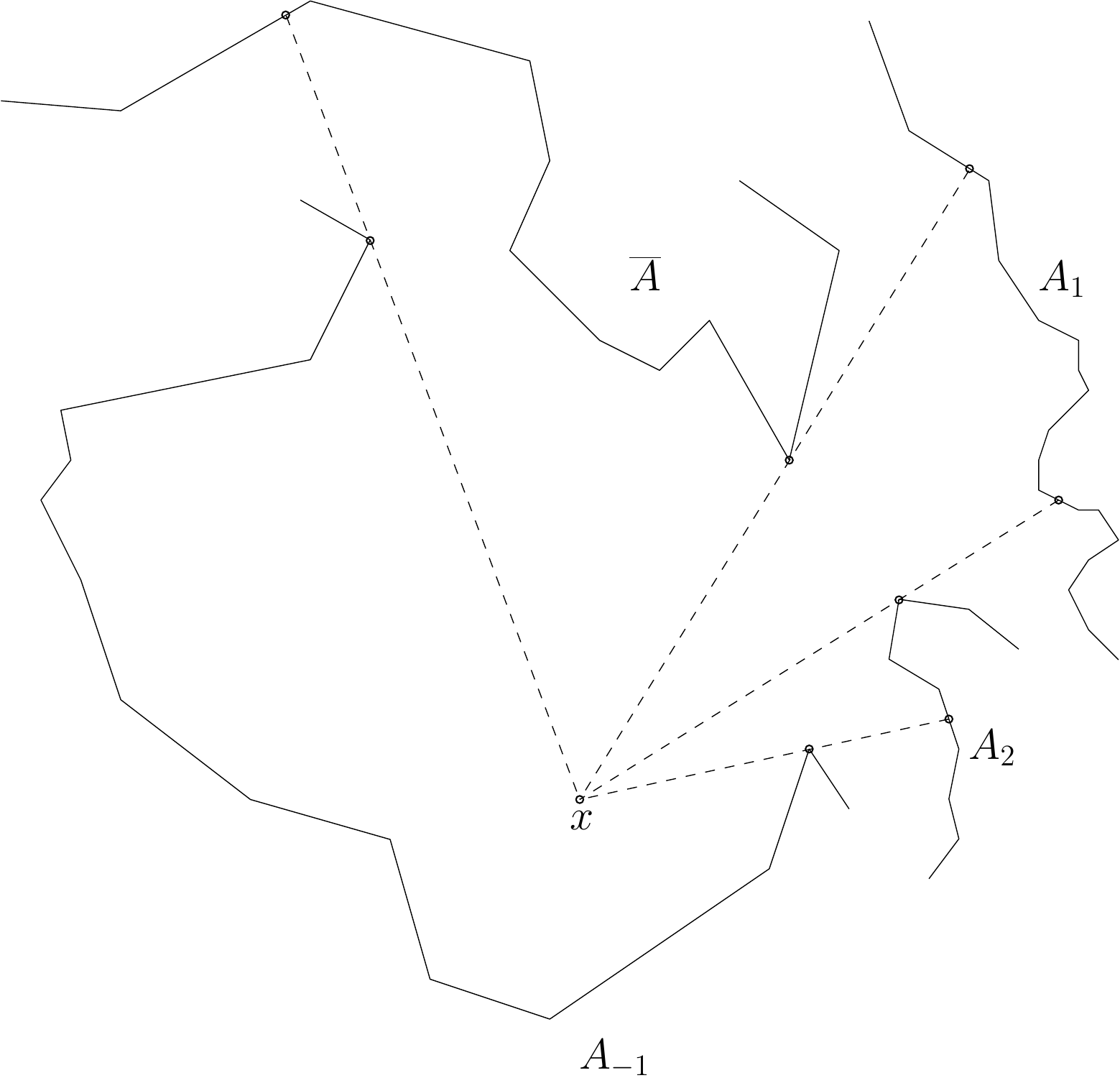}
\caption{\label{fig:fruitful_case_6}The only way a fragment of type (6) can have three children seen by $x$.}
\end{figure}

We call a non-root fragment \emph{fruitful} if it or one of its descendants is a branching fragment.  Only fragments of type (3-6) can be fruitful.  Only fragments of type (6) can have more than one fruitful child, and they can have at most two fruitful children.  No non-root fragment can have a child fragment of type (6).  Also, if the root has a child fragment of type (6), the root cannot have a child of type (3).  Therefore any level has at most 2 fruitful fragments.

We can now state the following:
\begin{itemize}
\item Level 1 has at most 4 child fragments that see $x$, at most 2 of which are fruitful.
\item A fruitful fragment has at most 3 child fragments that see $x$, at most 1 of which is fruitful.
\item A non-fruitful fragment has at most 1 child fragment that sees $x$.
\end{itemize}

Therefore any level has at most 2 fruitful fragments and the number of fragments at level $i$ that see $x$ is at most $4i$.

\end{proof}

\section{Open problems}

\begin{itemize}
\item We have obtained a $o(\log\opt)$-approximation factor for vertex guards and perimeter guards.  Can the same be done for point guards?

\item Can we do better than $O(\log\log\opt)$ for perimeter guards?  In particular, can we find a constant factor approximation algorithm to match the hardness of approximation result of Eidenbenz \cite{eidenbenz98}?  

\item For simple polygons, the set systems associated with point guards have maximum VC-dimension at least 6 and at most 23 \cite{valtr98}; it is believed that the true value is closer to the lower end of this range, perhaps even 6 \cite{kalai97}.  The upper bound of 23 holds \textit{a fortiori} for set systems associated with perimeter guards but the lower bound of 6 does not.  A lower bound of 4 follows from a trivial modification to an example for monotone chains \cite{king08}; we can increase this bound to 5 without too much difficulty (see Figure \ref{fig:vc_dim_5}).  Can set systems associated with perimeter guards actually have VC-dimension as high as 6?  And can the upper bound of 23 be improved?  It seems that improving the upper bound would be easier for perimeter guards than for point guards.
\end{itemize}

\begin{figure}
\centering
\includegraphics[scale=1]{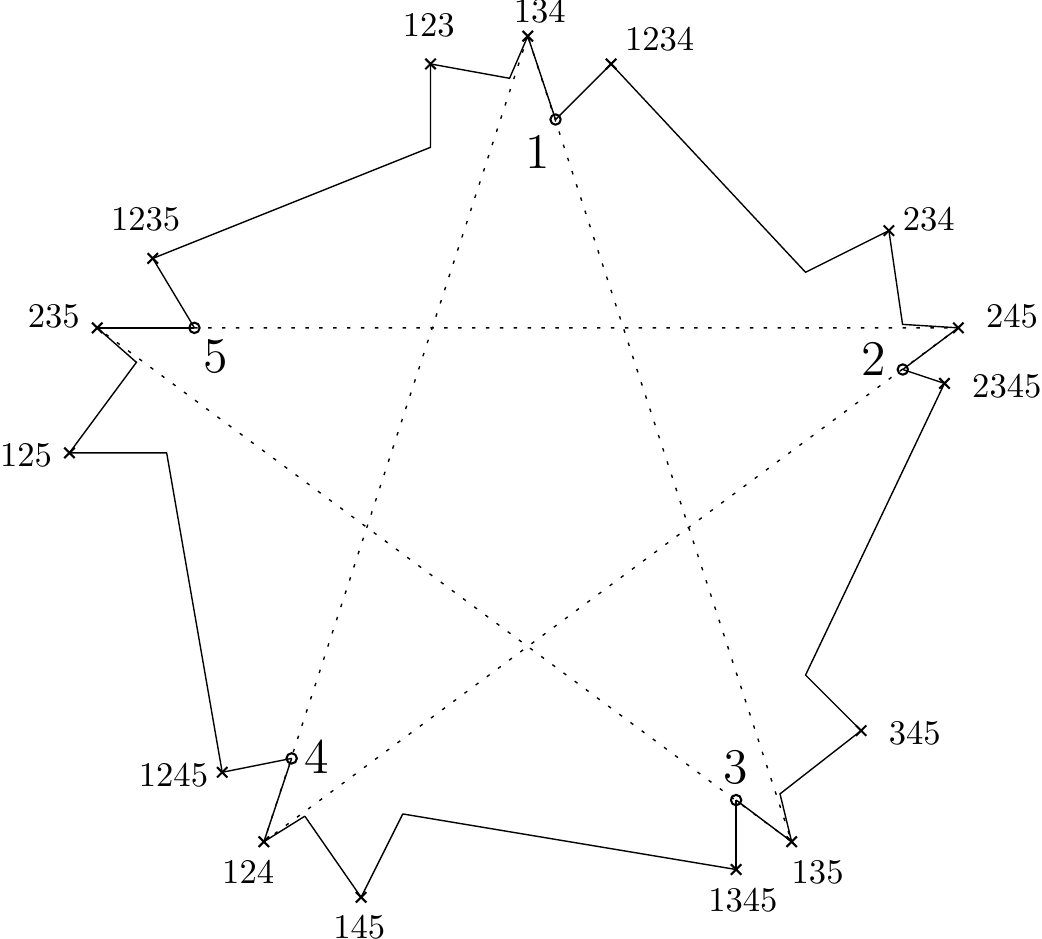}
\caption{\label{fig:vc_dim_5}A polygon with a set $S$ of 5 points on the perimeter.  The points in $S=\{1,2,3,4,5\}$ are marked with circles and labeled with large numbers.  Each point in $S$ sees all of $S$, and each guard seeing a subset of $S$ of size 3 or 4 is marked with a cross and labeled with small numbers indicating which points in $S$ it sees.  Guards seeing the 16 subsets of $S$ of size 0, 1, or 2 are not shown.  Adding these is a simple matter of adding nooks with very small angles of visibility, thus we can construct a polygon with 5 points on the perimeter shattered by $2^5$ perimeter guards.  Such a polygon can also be obtained via a fairly straightforward modification of the example of Kalai and Matou\v{s}ek for point guards \cite{kalai97}.}
\end{figure}

\bibliographystyle{plain}
\bibliography{art_gallery_epsilon-nets}

\end{document}